\documentclass[english,a4paper]{article}
\usepackage{dcolumn}
\usepackage{bm}
\usepackage{graphicx}
\usepackage{amsmath}
\usepackage{amsfonts}
\usepackage{amssymb}
\usepackage{amsthm}
\usepackage{amstext}
\usepackage{amsbsy}
\usepackage{amsopn}
\usepackage{amscd}
\usepackage{amsxtra}



\newtheorem{theorem}{Theorem}

\newtheorem{definition}{Definition}

\def\phi{\varphi}
\def\epsilon{\varepsilon}



\begin{document}
\title{Optimizing fingerprinting experiments for parameter identification: Application to spin systems}
\author{Q. Ansel, M. Tesch, S. J. Glaser\footnote{Department of Chemistry, Technische Universit\"at
M\"unchen, Lichtenbergstrasse 4, D-85747 Garching, Germany}, D. Sugny\footnote{Laboratoire Interdisciplinaire Carnot de
Bourgogne (ICB), UMR 6303 CNRS-Universit\'e Bourgogne-Franche Comt\'e, 9 Av. A.
Savary, BP 47 870, F-21078 Dijon Cedex, France and Institute for Advanced Study, Technische Universit\"at M\"unchen, Lichtenbergstrasse 2 a, D-85748 Garching, Germany, dominique.sugny@u-bourgogne.fr}}

\maketitle

\begin{abstract}
We introduce the Optimal Fingerprinting Process which is aimed at accurately identifying the parameters which characterize the dynamics of a physical system. A database is first built from the time evolution of an ensemble of dynamical systems driven by a specific field, which is designed by optimal control theory to maximize the efficiency of the recognition process. Curve fitting is then applied to enhance the precision of the identification. As an illustrative example, we consider the estimation of the relaxation parameters of a spin- 1/2 particle. The experimental results are in good accordance with the theoretical computations. We show on this example a physical limit of the estimation process.
\end{abstract}


\section{Introduction}
The fingerprinting method is a well-known technique generally used for determining the identity of a person. The basic concept is illustrated in Fig.~\ref{fig:fingerprint_principle}. The overall process can be decomposed into three different steps: \textit{(i)} a fingerprint recording of an ensemble of subjects, \textit{(ii)} the creation of a database (also called dictionary) where fingerprint images are associated with person identities, and \textit{(iii)} a recognition process where a numerical search algorithm finds the closest database element to the fingerprint of an unknown subject. Assuming that fingerprints are different for each person, a mapping between fingerprints and persons can be defined, making possible the identification protocol.
\begin{figure}
	\includegraphics[width=1.0\linewidth]{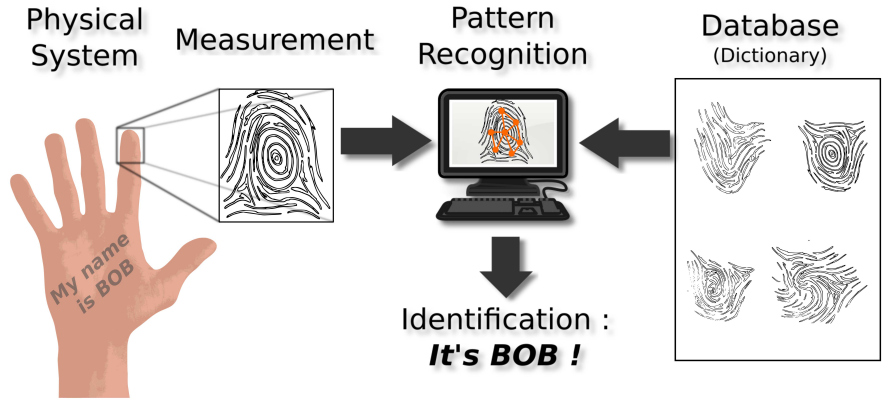}
	\caption{(Color online) The fingerprinting principle: A physical system is identified by using a pattern recognition between a fingerprint measurement and the elements of a database.}\label{fig:fingerprint_principle}
\end{figure}
This idea can be generalized to any system which has unique properties that can be revealed by a measurement process. This approach can be applied in a static setting, but also in a dynamical one where the system is subjected to an external control field. In this latter case, each element of the database corresponds to the time evolution of some observables under the action of the field, thus increasing the complexity of the fingerprints and the precision of the estimation. This idea has been recently adapted to Magnetic Resonance Imaging (MRI) for the identification of tissue parameters \cite{naturemri}. This initial investigation led to an impressive number of studies in this domain (see e.g. \cite{naturemri2,jacobpaper} to cite a few). A crucial issue in this fingerprinting process is the design of the excitation field. A simple approach using a time-dependent random field was proposed in \cite{naturemri} in order to limit the correlations between the different fingerprints. However, this approach does not incorporate information about the system dynamics and the recognition process and is therefore not expected to reach its precision limit. To overcome this fundamental difficulty, we propose in this paper to combine a standard fingerprinting process with recently developed optimal control techniques in quantum control \cite{rabitzreview,quaintreview,grape}. This Optimal Fingerprinting Process (OFP) allows us to maximize the efficiency of the identification and to minimize the errors made in the parameter estimation. As an illustrative example, this method will be used to identify the relaxation parameters of a spin- 1/2 particle. In this case, the estimation is made from a series of free-induction decay signals induced by impulsive excitations of different intensities. Note also that the measurements we consider are classical as they result from a continuous measurement of a large number of quantum systems. Nevertheless, the same type of processes \cite{haidong_metrology,haidong_metrology2} could be carried out with quantum measurements in quantum metrology \cite{quantummetrology1,quantummetrology2}. Closely related but different concepts using external fields to estimate the parameters of quantum systems have also been developed in the past few years \cite{geremia1,geremia2,turinici,schirmer1,schirmer2,schirmer3,li1,li2}. The combination of optimization and fingerprinting techniques has been explored in recent works in MRI~\cite{jacobpaper,maidens}.

The paper is organized as follows. Section~\ref{sec2} describes the theoretical framework of the method. The technique is applied in Sec.~\ref{sec3} to estimate the relaxation parameters of a spin system. The stability of the estimation in presence of noise is discussed and analyzed numerically. The efficiency of the optimal fingerprinting process is demonstrated experimentally in Sec.~\ref{sec4} on a spin 1/2 particle by using techniques of Nuclear Magnetic Resonance. Conclusion and prospective views are given in Sec.~\ref{conc}. A mathematical description of the method and numerical results are reported respectively in the Appendices \ref{appa}, \ref{appb} and \ref{appc}.
\section{Theoretical framework}\label{sec2}
We start the analysis with a general presentation of the method on an abstract system. We refer the reader to the Appendix~\ref{appa} for a detailed mathematical description of this technique. The state of the system is given at a time $t$ by $\Psi(t)\in\mathcal{H}$ ($\mathcal{H}$ is generally a Hilbert space), whose dynamics are governed by the following differential equation:
\begin{equation}\label{eq1}
\dot{\Psi}=\hat H (\vec{S},u(t))\Psi,
\end{equation}
where $\hat H$ is a linear operator (the Hamiltonian for quantum systems), $\vec{S}\in\mathbb{R}^p$ defines the $p$ unknown parameters to estimate and $u(t)$ is the control field. 
For a generic control field, the resulting time evolution $\Psi(t)$ will be different for each system characterized by different $\vec{S}$. The experimental system returns a specific response $g(\vec{S}_{0},t)$. This response defines the fingerprint of the system and is assumed to be unique for a given vector $\vec{S}_{0}$. In this measurement process, note that experimental noise has to be accounted for. This point will be investigated in the section about the case study in NMR.

The database is built from numerical simulations of the time evolution of $N$ systems characterized by specific values of the $\vec{S}$ parameters, denoted $\{\vec{S}_n\}_{n=1,\cdots, N}$. The database is defined as a set $\{f_n(t)\}_{n=1,\cdots ,N}$ of $N$ real functions associated with each $\vec{S}_n$.
%
%
The vector $\vec{S}_{0}$ is determined from the best match between $g(t)$ and one of the elements $f_n(t)$ of the dictionary. This leads to the estimation $\vec{S}_{0}\simeq \vec{S}_k$ for the experimental system, where $k\in \{1,\cdots , N\}$. The match is performed by minimizing the functional $D$ (called the \textit{recognition map}) over the elements of the dictionary. $D$ is defined as follows:
\begin{equation}\label{cost1}
D[f_n,g]=\left\Vert \frac{f_n}{||f_n||} - \frac{g}{||g||}\right\Vert^2.
\end{equation}
In Eq.~\eqref{cost1}, the two vectors are divided by their norm to eliminate a possible scaling factor between the experimental data and the theoretical model.

At this point, the method can be performed for any control field that distinguishes the elements of the dictionary. OFP is defined by introducing a 
figure of merit which is aimed at maximizing the distance between the elements of the dictionary, and thus improving the recognition process and the precision of the method. This functional $C$ can be expressed as:
\begin{equation}
C_N = \frac{1}{2N^2}\sum_{m,n} \mu_{m n} D\left[f_m,f_n \right],
\end{equation}
where the $\mu_{m n}$ are some weight factors. The parameter $C_N$ is the normalized average distance between the $N$ elements of the dictionary. The normalization factor ensures that $C_N\leq 1$ when all weights $\mu_{m n}=1$ (see Appendix~\ref{appa}). Maximizing this quantity allows the minimization of the overlap of $g$ with the other functions of the dictionary and thus the error made in the estimation procedure. As described in Appendix~\ref{appb}, a generalized version of the optimal control GRAPE algorithm \cite{grape,trackinggrape} can be used to numerically generate the control field. We stress that this extension is not trivial since the maximization of the distance between the different systems is not made at one point but for the whole time evolution. OFP is schematically described in Fig.~\ref{fig:map}.
\begin{figure}[h]
	\includegraphics[width=1.0\linewidth]{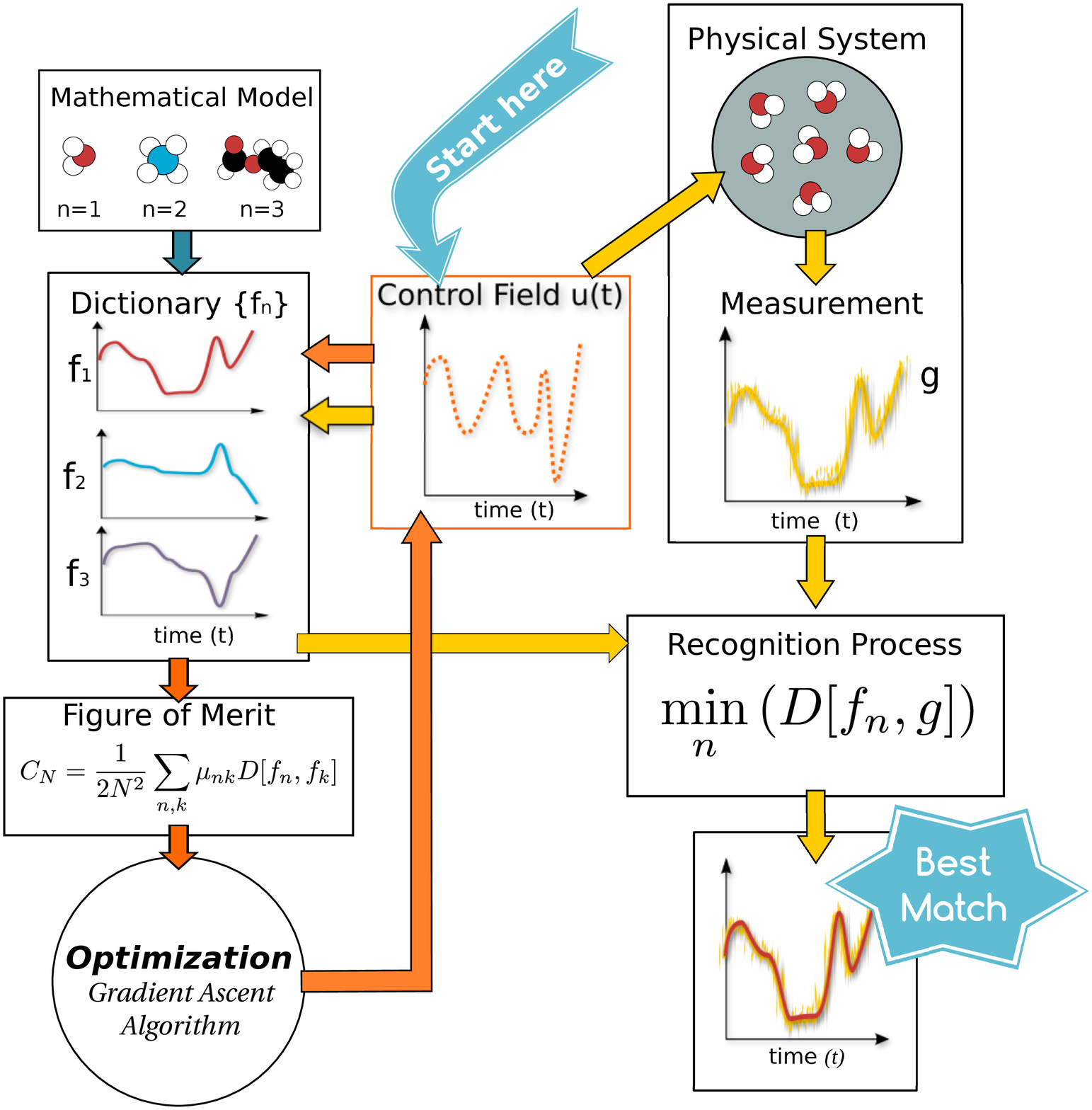}
	\caption{(Color online) The OFP is composed of two different loops. The first loop (yellow or light gray arrows) is the standard fingerprinting process. A control field $u(t)$ is designed at the starting point of the loop. This field is applied to a physical system which returns a specific response $g(t)$. On the other side, the response is computed numerically for an ensemble of physical systems with different values of the parameters. These simulations define a \textit{dictionary} of functions $f_n(t)$. The recognition process allows us to find the best match between elements of the dictionary $f_n$ and the result of the measurement $g$ (see the text for details). The second loop (orange or dark gray arrows) describes the dictionary optimization. The optimization is performed for an ensemble of $N$ systems with different values of the parameters. An optimal control algorithm is used to maximize numerically the figure of merit $C_N$.\label{fig:map}}
\end{figure}
%

\section{A case study in Nuclear Magnetic Resonance}\label{sec3}
As an illustrative example, we investigate the estimation of the relaxation parameters of a spin system by OFP in NMR~\cite{ernstbook,levittbook}. We consider an inhomogeneous ensemble of spin- 1/2 particles with different resonance offsets $\omega$ and radio-frequency inhomogeneities $\alpha$ whose dynamics are ruled by the Bloch equations. In the rotating frame, the equation of motion for each isochromat is given by~\cite{levittbook,kozbar}:
\begin{equation}\label{eq:bloch}
\frac{d}{dt} \vec M
    =
    \left(
    \begin{array}{cccc}
    1 & 0&0&0\\\
    0             & -\frac{1}{T_2} & -\omega     & \alpha\omega_y(t) \\
    0             & \omega     & -\frac{1}{T_2} &  -\alpha\omega_x(t) \\
    \frac{1}{T_1} & -\alpha\omega_y(t)  & \alpha\omega_x(t) & -\frac{1}{T_1}
    \end{array} \right)
    \vec M
\end{equation}
where $\vec M = (1,M_x,M_y,M_z)^t$ is the extended Bloch vector (the radius of the Bloch ball is normalized to 1) and $M_{x,y,z}^{(\omega)}$ its coordinates along the $x,y,z$- directions. In this case, note that $\mathcal{H} = \mathbb{R}^4$. The relaxation times $T_1$ and $T_2$ are assumed to be the same for all the isochromats of the sample. The control amplitudes are given by $\omega_x(t)$ and $\omega_y(t)$. $\omega$ is the resonance offset and the parameter $\alpha$ describes the experimental scaling of the radio-frequency field applied to the sample~\cite{kozbar}. In this example, OFP is used to estimate the parameters $T_1$ and $T_2$, and the measured signal results from the average magnetization of the spins with different values of $\omega$ and $\alpha$. The functions of the dictionary are given by
\begin{equation}\label{eqdicspin}
f_n(t) = \big(\bar M_x^{(n)}(t),\bar M_y^{(n)}(t) \big),
\end{equation}
where $\bar M_\mu$, $\mu=x,y$, is the average of $M_\mu$ over the sample. 
The averaging procedure is defined by using a probability distribution in $\omega$ and $\alpha$ which can either be known before the optimization of the database or adjusted during the recognition process (see below for an example).

The control field is a sequence of short pulses (with respect to $T_1$ and $T_2$), modeled by Dirac distributions:
\begin{align}
\omega_\mu(t) = \sum_{k=1}^{N_p} \omega_{\mu,k} \delta (t-kT), && \mu = x,y,
\end{align}
where $\omega_{\mu,k}$ is the amplitude of the $k$-th pulse, $N_p$ the number of pulses and $T$ the time between each pulse. 
This approximation leads to a straightforward time discretization of the dynamics of the system. The measured signal corresponds to the average magnetization just after each $\delta$- pulse, $\big(\bar M_x^{(n)}(t=kT),\bar M_y^{(n)}(kT) \big)$ with $k=1,\cdots ,N_p$.


%
\begin{figure}[h]
	\includegraphics[width=\linewidth]{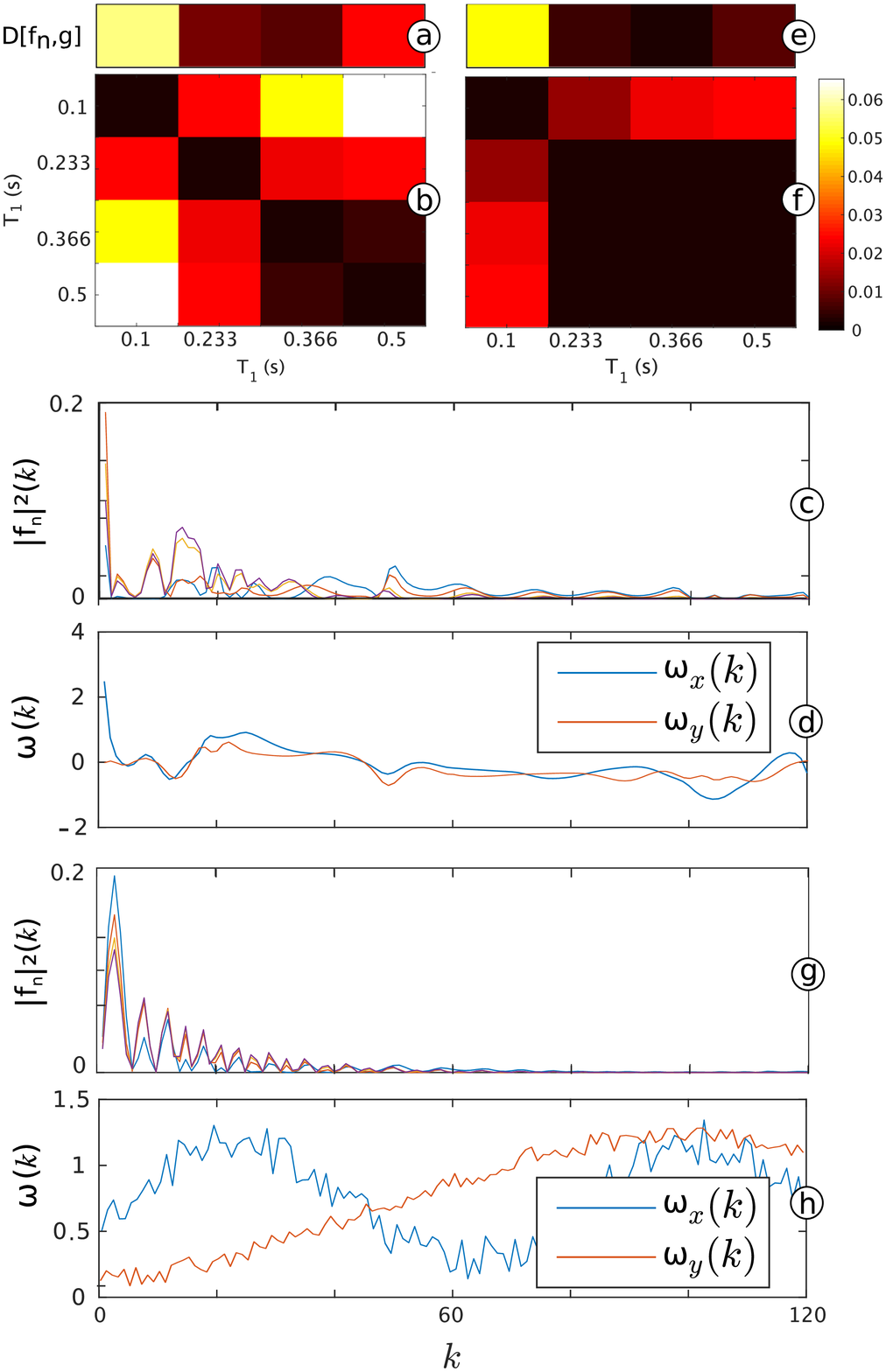}
	\caption{(Color online) The dictionary is composed of 4 elements regularly distributed in the interval $T_1\in[0.1,0.5]$~s. Giving an optimal (d) and a random (h) control fields (the black and the red (dark gray) lines represent respectively $\omega_x$ and $\omega_y$), we can compute the associate square modulus of dictionary functions (c) and (g). The dictionary functions are dimensionless. The efficiency can be checked with the  recognition maps $(T_1(m),T_1(n))\mapsto D[f_m, f_n]$ in (b) and (f). The panels (a) and (e) show the distance between the elements of the dictionary and the system to identify ($T_1 = 0.3~s$). The parameter $k$ refers to the number of pulses in the control process.}
	\label{fig:recognition_map}
\end{figure}
We first analyze the ideal situation of a homogeneous ensemble of spin- 1/2 particles irradiated on resonance, which is described by Eq.~\eqref{eq:bloch} with $\omega=0$ and $\alpha=1$. We assume that $T_2=0.2$~s is perfectly known, the goal being to estimate the value of $T_1=0.3$~s. To simplify the presentation of the results, we consider a simple database associated with four values of $T_1$: 0.1, 0.233, 0.366 and 0.5~s. Following the general procedure of OFP, we compute the optimal field for this dictionary in the case where all the coefficients $\mu_{mn}$ are set to 1. The time $T$ is set to 10~ms. The efficiency of the optimal solution is benchmarked against a time-dependant random field as shown in Fig.~\ref{fig:recognition_map}, which displays the recognition map $(T_1(m),T_1(n))\mapsto D[f_m,f_n]$ for the two databases and the time evolution of the different elements of the dictionary. The contrast of Fig.~\ref{fig:recognition_map} gives a first quantitative measure of the precision of the recognition process. In this example, $C_N$ is equal to 0.06 for the optimal fields and 0.03 under the random fields. The minimum values of the recognition maps are respectively 0.019 and 0.001. 

A first estimation of the value of $T_1$ can be made directly with the colorbars of Fig.~\ref{fig:recognition_map} and leads to $T_1\simeq 0.366$~s. Better accuracy of the fingerprinting process can be obtained by increasing the size of the dictionary. However, this procedure has a limit in terms of computational time, in particular to find the global optimum of the problem since the complexity of the control landscape increases rapidly with $N$. These numerical difficulties inherent to OFP can be avoided by using curve fitting in the post-measurement lookup stage. The fit is made with a minimization of $D$ based on a descent gradient algorithm with respect to the parameters $\vec{S}$ (here $T_1$ and $T_2$). In this case, the control field is fixed and a discrete derivative is used to compute the gradient $(\partial_{T_1}D[f(T_1,T_2),g],\partial_{T_2}D[f(T_1,T_2),g])$. Numerical simulations reveal that this approach converges after 50 or 100 iterations. Note that this concept is close to the Levenberg-Marquardt Method~\cite{Marquardt}, which is included in most of the curve fitting codes. For the ideal system, we obtain $T_1=0.3$~s both for the optimal and the random fields.

\begin{figure}[h]
\includegraphics[width=\linewidth]{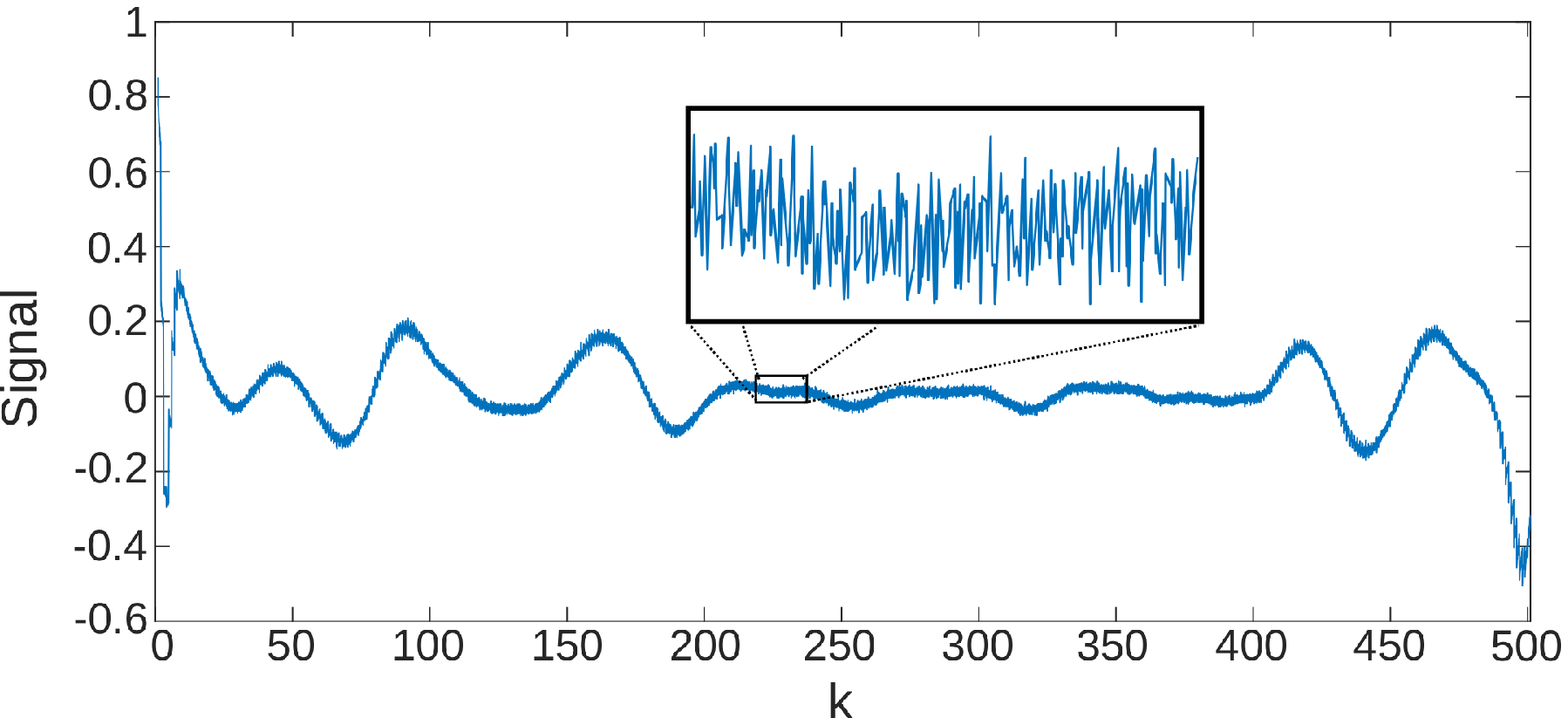}
\caption{(Color online) Time evolution of the experimental signal (in the $y$- direction) during a sequence of 500 $\delta$- pulses. The signal is expressed in arbitrary units. The insert is a zoom showing the fluctuations of the signal. The parameter $k$ refers to the number of the pulse in the control process.}
	\label{fig:noise}
\end{figure}
We investigate the stability in presence of noise of this approach. An experimental example is displayed in Fig.~\ref{fig:noise}, where we observe the fluctuations of the signal around a mean value. The experimental setup is modeled by considering a simulated noise added to the response of the system:
\begin{equation}
g(t)=\bar{g}+\epsilon\mathcal{N}(t),
\end{equation}
where $\bar{g}$ is the mean value of $g$ over many measurements, $\epsilon$ the standard deviation and $\mathcal{N}$, a gaussian noise centered in 0 with a variance of 1. Since the radius of the Bloch ball is normalized to 1, $\epsilon$ can be interpreted as a percent deviation. Using the optimal and random fields of Fig.~\ref{fig:recognition_map}, we optimize the parameter $T_1$ for different responses $g(t)$. The algorithm converges towards different values of $T_1$ for each response of the system. Figure~\ref{fig:uncertainty_fx_noise} displays the mean value and the width of the $T_1$- distribution (denoted $\Delta T_1^{\textrm{OPT}}$ and $\Delta T_1^{\textrm{RAND}}$ for the optimal and random fields respectively) as a function of $\epsilon$. For each value of $\epsilon$, we consider 30 measurements  $g(t)$ and the widths are determined by assuming a Gaussian distribution. This width can be interpreted as the accuracy of the corresponding estimation process. We observe in Fig.~\ref{fig:uncertainty_fx_noise} that the gain can be very large with the optimization procedure, a factor of the order of 100 for $\epsilon=0.001$ is obtained. The random field fails to predict $T_1$ accurately, even for low noise amplitude. In a standard experiment, the amplitude of the noise is generally of the order of 1\% of the maximum of the signal. This correspond here to $\varepsilon = 0.01$. Similar results have been obtained for the parameter $T_2$.

\begin{figure}[h]
\includegraphics[width=\linewidth]{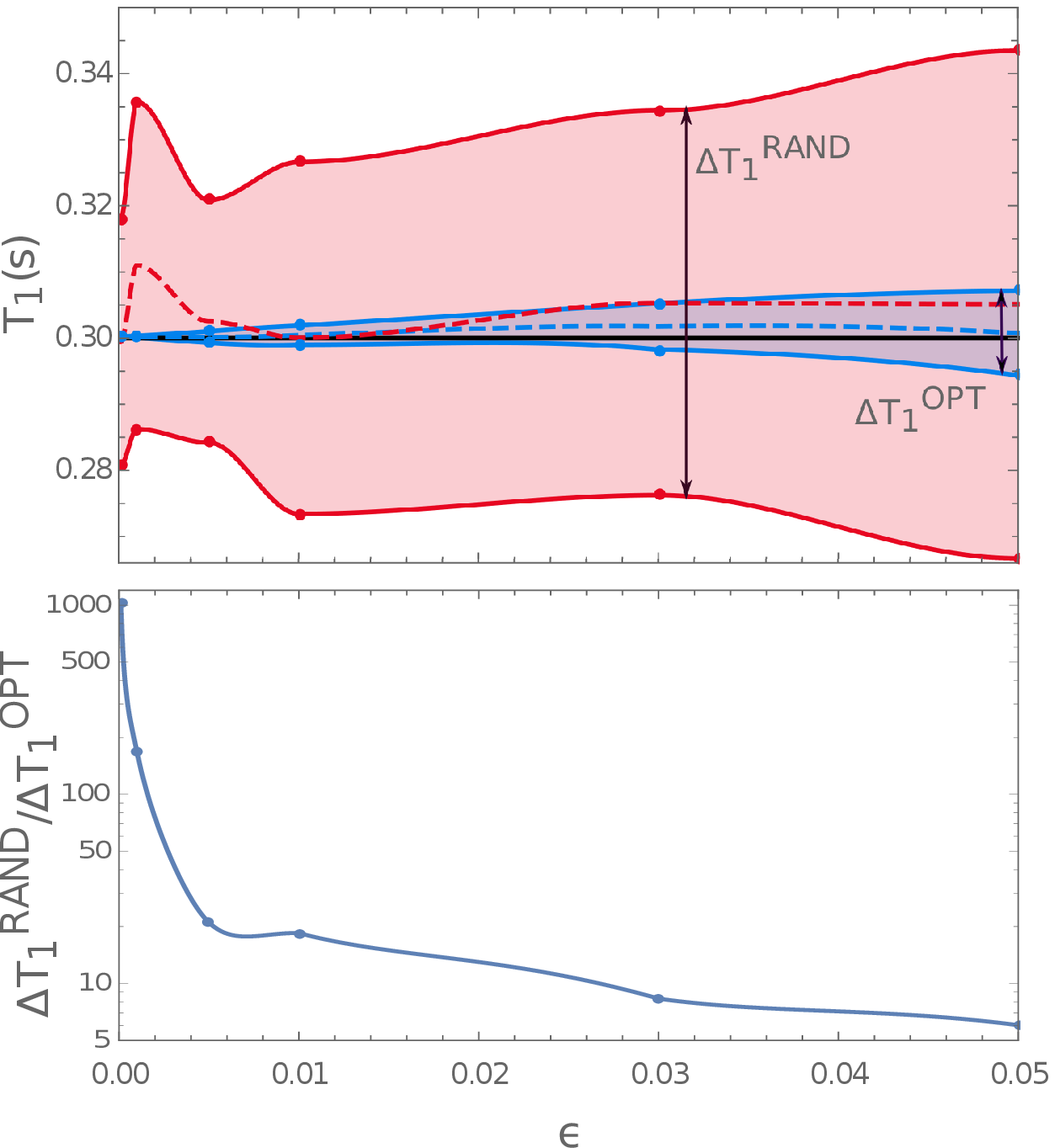}
\caption{(Color online) (top) Width of the distribution of the estimated $T_1$ values by the fingerprinting process (dark gray or blue - optimal, light gray or red - random) as a function of the noise amplitude $\epsilon$, which is dimensionless. The dashed lines depict the mean values of the two distributions. The horizontal solid line is the value of the $T_1$ parameter. (bottom) Plot of the ratio $\Delta T_1^{\textrm{RAND}}/\Delta T_1^{\textrm{OPT}}$ of the width of the two distributions as a function of $\epsilon$.}
	\label{fig:uncertainty_fx_noise}
\end{figure}

\section{Experimental results}\label{sec4}
We study experimentally the simultaneous estimation of the relaxation time $T_2$ and the distribution parameters of the offset inhomogeneities, while the $\alpha$- parameter can be set to 1 with good accuracy (see Eq.~\eqref{eq:bloch}). The offset distribution $\rho(\omega)$ is assumed to be Lorentzian:
\begin{equation}
\rho(\omega)\propto \left( 1+\frac{4(\omega - \bar \omega)^2}{\Delta \omega ^2}\right) ^{-1},
\end{equation}
where $\Delta \omega$ is the full width at half maximum (FWHM) and $\bar \omega$ the center of the distribution. The parameter $T_1$ was previously estimated to be 87~ms by inversion recovery \cite{levittbook}. The estimation of the parameter $T_2$ is a challenging issue because $T_2$ and $\Delta\omega$ are both responsible for the decay of the measured transverse magnetization. An effective transverse relaxation time $T_2^*$ defined by the relation
\begin{equation}\label{eqT2star}
\frac{1}{T_2^*}=\frac{1}{T_2} + \frac{\Delta \omega}{2}
\end{equation}
is usually introduced in magnetic resonance to account for the two physical effects \cite{levittbook}.
%
\begin{figure}
\includegraphics[width=\linewidth]{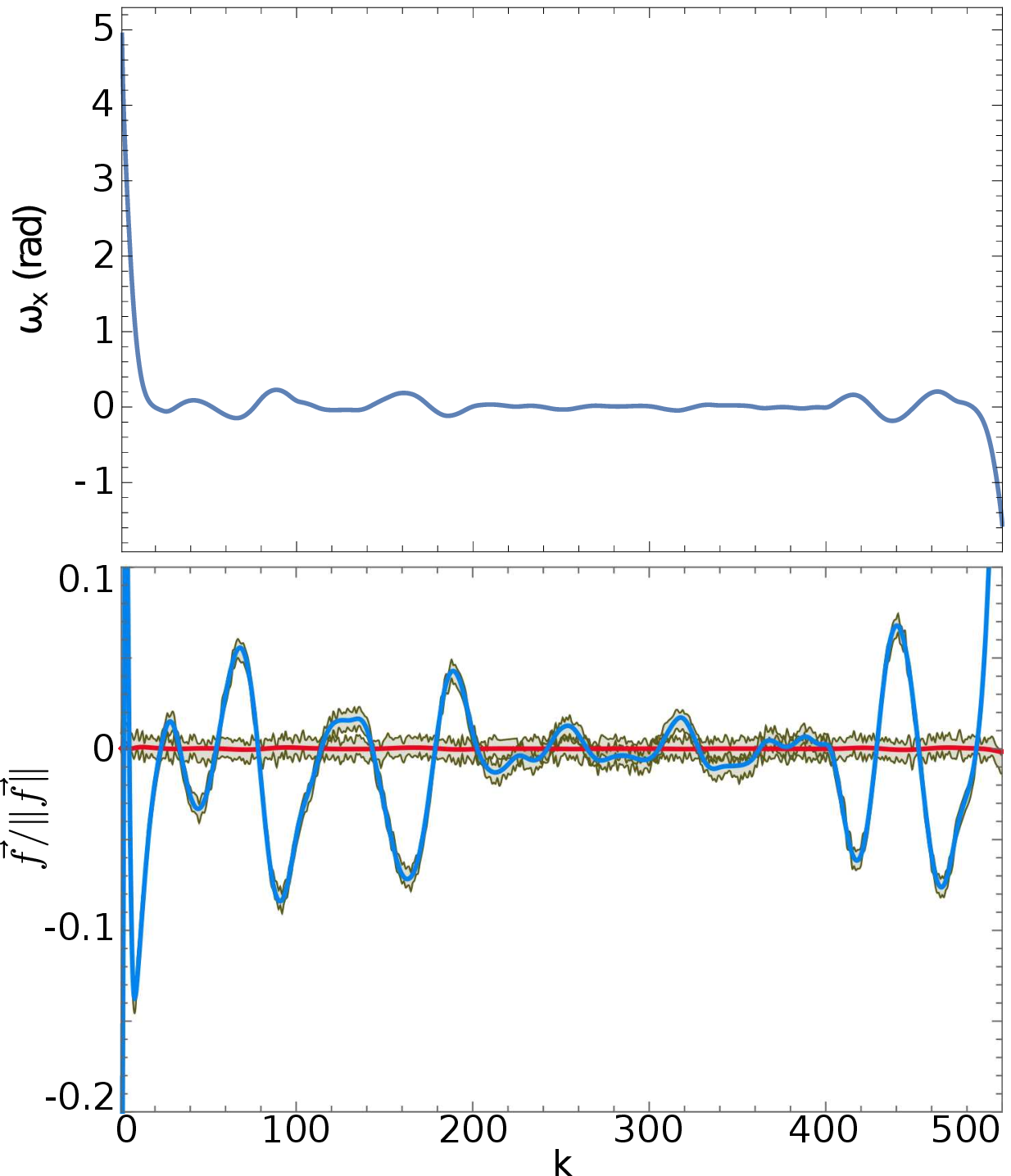}
\caption{(Color online) (top) Optimized control field along the $x$- direction ($\omega_y = 0$) to estimate the parameter $T_2$. (bottom) Time evolution of the simulated trajectories $M_x(t)$ (red or light gray) and $M_y(t)$ (blue or dark gray). The experimental data correspond to the gray areas around the numerical solutions, which give an estimation of the accuracy of the measurement (see the text for details).}
	\label{fig:experiment}
\end{figure}

A specific optimal pulse sequence sensitive to $T_2$ for an ensemble of spins with an average value of $\Delta\omega=20$~rad.s$^{-1}$ was designed. Note that only one control field along the $x$- direction was used to improve the convergence of the algorithm. Experiments were performed at room temperature on a Bruker Avance 600~MHz spectrometer. We used the $^1$H spins of H$_2$O with D$_2$O (99.9\%) as a solvent in a Shigemi tube. CuSO$_4$ was added as a $T_1$-shortening agent. The control field is a sequence of $N_p=500$ $\delta$-pulses separated by a time $T=10$~ms. The control field and the time evolution of the transverse magnetization are plotted in Fig.~\ref{fig:experiment}.
A reasonable match is found between the theoretical and the experimental results, which can be compared with the experimental error made in the measurement of the Bloch vector, as shown in Fig~\ref{fig:experiment}. Independent measurement based on a spin echo sequence leads to $T_2 = 60.5 \pm 0.5$~ms and $\Delta\omega=28.5 $~rad.s$^{-1}$. If we assume that the value of $\Delta \omega$ is known then OFP gives $T_2=60.4\pm 3.6$~ms and $\bar \omega = 0.1 \pm 0.6$~rad.s$^{-1}$. In the general case, due to the correlations between $\Delta \omega$ and $T_2$, it was not possible to estimate precisely the two parameters. As displayed in Fig.~\ref{fig:experiment2}, we observe that the figure of merit $D$ is almost the same for $\Delta\omega\in [20,38]$~rad.s$^{-1}$. On this interval, the value of $T_2^*$ is constant and in agreement with the experiment. Additional information would be required to estimate $T_2$ independently of $\Delta\omega$. From a computational point of view, it seems difficult to include different values of the bandwidth in the definition of the dictionary for improving the accuracy of the estimation. The same analysis was performed with several random sequences and we were not able to recover the right values of $T_2$ or $T_2^*$, showing thus the efficiency of OFP.

\begin{figure}
\includegraphics[width=\linewidth]{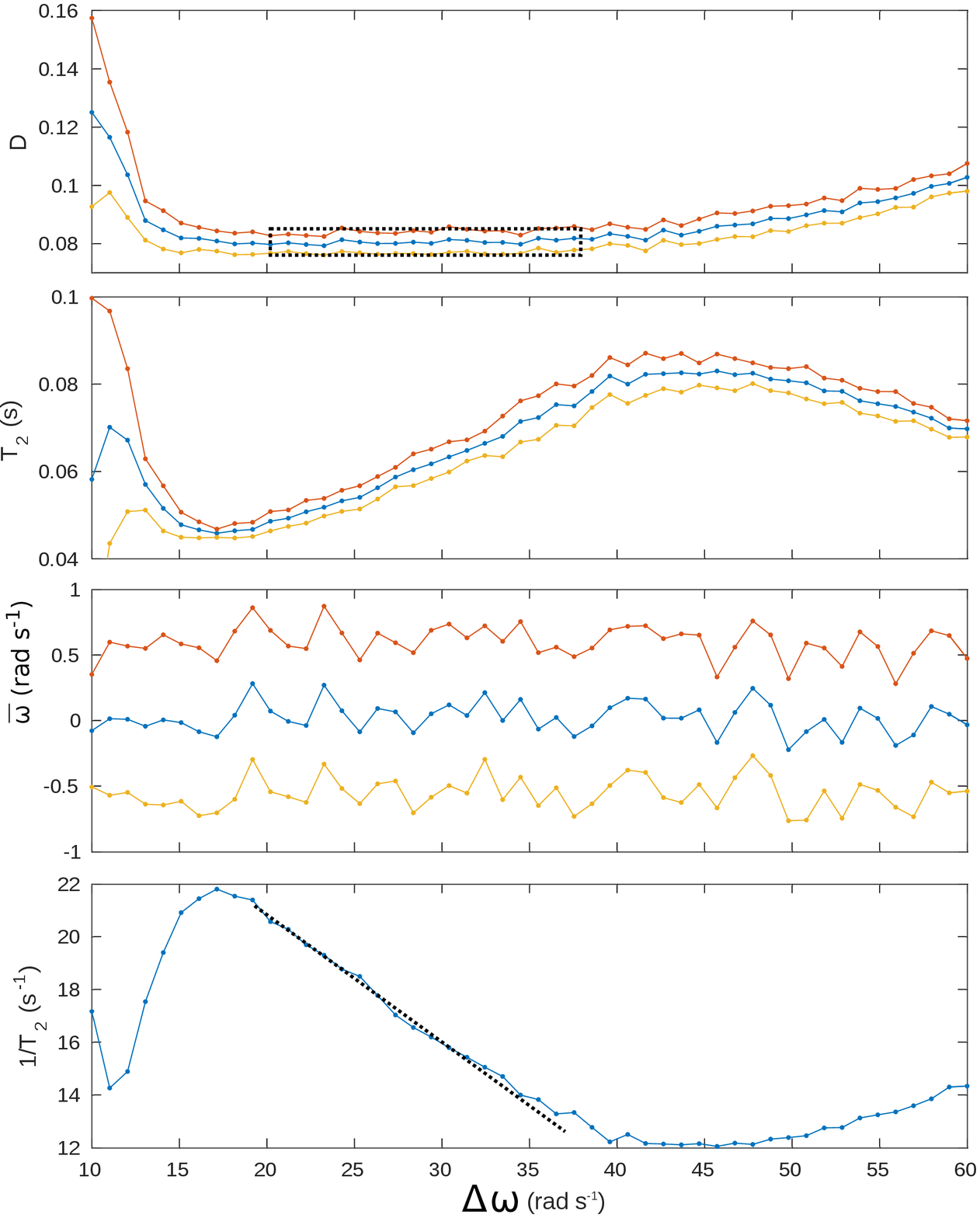}
\caption{(Color online) (Top) Minimum distance $D$ (defined in Eq.~\eqref{cost1}, $D$ is dimensionless) between the simulated and the experimental curves as a function of $\Delta \omega$. The dashed rectangle indicates the interval where $D$ is minimum. (Middle) Evolution of $T_2$ and $\bar{\omega}$ as a function of $\Delta\omega$. In the different panels, the blue (black) and the red/yellow (dark gray/ light gray) curves represent respectively the mean value of the signal and the upper/lower bounds of uncertainty. (Bottom) Plot of $1/T_2$ as a function of $\Delta \omega$. In the interval $[20,38]$~rad.s$^{-1}$, the parameter $T_2^*$ is constant as shown by the dotted line.} \label{fig:experiment2}
\end{figure}

\section{Conclusion}\label{conc}
We have introduced in this work the principles of OFP with an application to spin dynamics. The optimization procedure provides a method to approach the physical limits of the protocol in terms of sensitivity. FP has several advantages over the conventional methods. It allows a quantitative estimation of multiple parameters at the same time (e.g. the times $T_1$ and $T_2$), while only information about a single parameter is traditionally achieved. This advantage must be tempered if several parameters are correlated. This aspect has been illustrated in Sec.~\ref{sec4} with the offset terms and the $T_2$ relaxation time. The repeated acquisitions of data for the standard techniques are replaced by a single-shot measurement process in FP, which can drastically reduce the overall time of the experiment~\cite{naturemri}. Finally, FP is expected to be less sensitive to experimental imperfections and to the presence of noise. All these aspects are improved by the optimization procedure proposed in this paper. As shown in Sec.~\ref{sec3}, the better stability of OFP against noise perturbation is illustrated in a model example. A quantitative comparison with the standard Inversion Recovery approach~\cite{levittbook} is also made in App.~\ref{appc} for the estimation of the $T_1$ parameter. This theoretical comparison shows the better accuracy of OFP in this case.

This analysis paves the way for further investigations in MRI and realistic in vivo experiments~\cite{MRIbook,MRIconolly,lapertcontrast} in which the standard version of the fingerprinting process with random pulses has been applied with success~\cite{naturemri}. The concept of OFP could also be transferred to other domains such as quantum optics and atomic and molecular physics. An example is given by the control of molecular alignment and orientation in which pulse shaping techniques have been applied with success \cite{stapelfeldt:2003,seideman:2006}. The measure of the alignment could be used to estimate molecular parameters such as, e.g., the collisional relaxation rates \cite{ramakrishna:2005,vieillard:2013,vieillard:2008}. Another aspect could be to explore the applicability of this approach in a dynamical feedback framework where the control field would be adjusted in real time according to the results of the measurements. Finally, it seems promising to combine this technique with other methods of data analysis such as filter-diagonalization \cite{filter}, Bayesian estimation \cite{schirmer1} or Fisher  information~\cite{jacobpaper}.


\noindent\textbf{ACKNOWLEDGMENT.}\\
S.J. Glaser acknowledges support from the DFG (Gl 203/7-1). D. Sugny and S. J. Glaser acknowledge support from the ANR-DFG research program Explosys (ANR-14-CE35-0013-01; DFG-Gl 203/9-1). This work was done with the support of the Technische Universit\"at M\"unchen – Institute for Advanced Study, funded by the German Excellence Initiative and the European Union Seventh Framework Programme under grant agreement 291763. Experiments were performed at the
Bavarian NMR center at TU M\"unchen.

\appendix

\section{A mathematical description of the optimal fingerprinting process}\label{appa}
Optimal Fingerprinting Process (OFP) is not a standard problem in optimal control theory. The originality lies in the fact that the goal of the control procedure is not to bring a system from an initial state to a target state while minimizing a cost function. OFP is aimed at improving the characteristics of a dictionary in order to accurately estimate the physical parameters of a given system. The optimization in OFP is based on the maximization of a figure of merit, which is chosen in relation to the recognition process, i.e. the procedure used to find the best match between the measurements and the dictionary entries.

\subsection{The recognition process}\label{sec:recognition_process}
We consider three different sets:
\begin{itemize}
	\item a space of real square integrable functions $g$, with $g : [0,T] \mapsto \mathbb{R}^d$. This space is the set of all the possible measurements and $d$ the number of components of $g$.
	\item a set $\mathcal{S}$ of $N$ elements $\vec{S}_n $. Each $\vec{S}_n$ is a $p$- tuple of values of the $p$ physical parameters to estimate, $\vec{S}_n=(S_1(n),\cdots ,S_p(n))$.
	\item a set of $N$ time-dependent functions $f_n$ : $\{f_n \}_{n=1,\cdots , N}$, $f_n: [0,T] \mapsto \mathbb{R}^d$. This set is the dictionary used in a fingerprinting process.
\end{itemize}
The space of possible measurements is partitioned into $N$ different subsets, $\{\sigma_n\}_{n=1,\cdots, N}$. A function $f_n$ of the dictionary is associated with each element $\sigma_n$. Ideally, the partitioning satisfies the constraints:

 \textbf{1)} $f_n \in \sigma_n$ and $\forall k \neq n $, $f_n \not\in \sigma_k$,

 \textbf{2)} $\forall g \in \sigma_n$, $D[f_n,g]<D[f_k,g]$. If $D[f_n,g]=D[f_k,g]$, then $g$ belongs to the common boundary between $\sigma_n$ and $\sigma_k$.

%
%
The functional $D$ is defined by:
\begin{equation}
D[f_n,g] = \left \Vert \frac{f_n}{\Vert f_n\Vert} - \frac{g}{\Vert g\Vert} \right \Vert ^2.
\end{equation}
$D$ is the square of the distance between two normalized functions. The first function, $f_n$, belongs to the dictionary, and $g$, is the result of a measurement. $\Vert f\Vert$ refers to the norm of $f$, defined by $\Vert f\Vert =\sqrt{(f,f)}$, $(\cdot,\cdot)$ being the scalar product. In the continuous case, the scalar product of two functions $f(t)$ and $g(t)$ can be defined as $(f(t),g(t))=\int_0^Tf(t)g(t)dt$.
\begin{definition}
The recognition process consists in associating a function $g$ to an element $\sigma_n$ of the partition. To each $\sigma_n$ is attached a set of values of the physical parameters $\vec{S}_n$. A bijection can thus be defined between a partition and a specific physical system.
\end{definition}
The recognition process can be mathematically defined as follows
\begin{equation}\label{eqrp}
f_m =\arg \left[ \min_{n=1,\cdots, N} \left( D[f_n,g] \right)\right].
\end{equation}
Equation~\eqref{eqrp} means that the function $f_m$ associated with $g$ is the one minimizing $D[f_n,g]$ over all the possible functions $f_n$.
Note that the functional $D$ can also be written as:
\begin{equation}\label{eqdef2}
D[f_n,g] = 2\left( 1 - \frac{(f_n,g)}{\Vert f_n\Vert.\Vert g\Vert} \right).
\end{equation}
%
\subsection{Figure of merit and optimization}
\label{sec:fig_of_merit}
It is worth noting that normalized functions are used in the definition of the functional $D$. This point is due to the fact that the measurement process is defined up to a scaling factor. From a mathematical point of view, this means that we do not consider a function but a class of functions. This paragraph is aimed at giving a rigorous framework to this issue. The main result is the simplification of the figure of merit from a functional to a real function of one real variable. This geometric description gives  also an upper bound on the values of the figure of merit. We first define the equivalence classes of the functional $D$.
\begin{definition}\label{def2}
Two functions $g$ and $g'$ are said to be equivalent and denoted $g \sim_n g'$ if and only if  $D[f_n,g] = D[f_n,g']$.
The equivalence class is given by $\mathcal{C} _n ^\alpha = \{ g,~D[f_n,g] = 2\left( 1 - \cos\alpha_n \right)\}$, where $\alpha_n$ is the angle between $g$ and $f_n$, with $(g,f_n)=\Vert g\Vert \Vert f_n\Vert \cos\alpha_n $ (see Eq.~\eqref{eqdef2}).
\end{definition}

Note that the use of equivalence classes transforms the functional $D[f_n,g]$ defined over an infinite dimensional space into a simple function $D(\alpha_n)$ over $\mathbb{R}$ and the only relevant parameter is the angle $\alpha_n$.

The main difficulty from the optimization point of view is to define the concept of a \emph{good dictionary}. In particular, the size of the dictionary is arbitrary and depends on the discretization used for the physical parameters. Since the parameters take their values in a continuous set, it is possible to consider a dictionary of arbitrarily large size. Furthermore, the dictionary must be independent of experimental imperfections because it is computed before the measurement process. We solve this problem with the following proposition: The best dictionary is the one which optimizes the recognition process. We introduce the following figure of merit $C_N$ to measure the quality of the dictionary.
\begin{definition}
The figure of merit $C_N$ for a dictionary of $N$ elements is given by the mean value of all possible square distances between the functions $f_n$ and $f_k$:
\begin{equation}
C_N = \frac{1}{2 N^2}\sum_{n,k = 1}^{N} D[f_n,f_k].
\end{equation} \label{def:figure_of_merit}
\end{definition}
As shown below, the normalization factor $2N^2$ is chosen so that the upper bound of $C_N$ is 1. Some properties of $C_N$ can be established. Equivalent classes allow us to formulate the problem into a simple geometric picture. The normalized functions $\vec{\mathsf{f}}_n=f_n/\Vert f_n \Vert$ can be viewed as points belonging to a $(N-1)$- sphere $S^{N-1}$ of radius 1, and consequently the dictionary is a $(N-1)$- simplex. The distance between two vertices $\vec{\mathsf{f}}_n$ and $\vec{\mathsf{f}}_k$ is given by $\sqrt{D[\vec{\mathsf{f}}_n,\vec{\mathsf{f}}_k]}$. We are interested in the shape of the simplex which maximizes $C_N$ (i.e which maximizes the sum of the lengths of the edges). For $N=2$, it is obvious that the maximum is reached when $\vec{\mathsf{f}}_1 = -\vec{\mathsf{f}}_2$. For the case $N=3$, it can be shown that the highest value of $C_3$ is obtained for an equilateral triangle where each angle $\alpha_{nk}$, with $(\vec{\mathsf{f}}_n,\vec{\mathsf{f}}_k)=\cos (\alpha_{nk})$, is equal to $2\pi/3, ~\forall n,k$. For higher values of $N$, we get a regular simplex:
\begin{theorem}\label{thsimplex}
The optimal simplex is given by the set of functions $\{\vec{\mathsf{f}}_n\}_{n=1,\cdots,N}$ corresponding to a $(N-1)$- regular simplex of radius 1. The upper bound of $C_N$ is equal to 1.
\end{theorem}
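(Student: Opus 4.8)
The plan is to collapse the double sum defining $C_N$ into a single nonnegative quantity, namely the squared norm of the centroid of the normalized functions, from which both the bound and the extremal configuration can be read off directly. Since each $\vec{\mathsf{f}}_n$ is a unit vector, the expression $D[\vec{\mathsf{f}}_n,\vec{\mathsf{f}}_k]=2(1-\cos\alpha_{nk})$ of Eq.~\eqref{eqdef2} becomes simply $D[\vec{\mathsf{f}}_n,\vec{\mathsf{f}}_k]=\Vert \vec{\mathsf{f}}_n-\vec{\mathsf{f}}_k\Vert^2=2-2(\vec{\mathsf{f}}_n,\vec{\mathsf{f}}_k)$. Summing over all pairs and recognizing the Gram structure $\sum_{n,k}(\vec{\mathsf{f}}_n,\vec{\mathsf{f}}_k)=\bigl(\sum_n\vec{\mathsf{f}}_n,\sum_k\vec{\mathsf{f}}_k\bigr)=\Vert\sum_n\vec{\mathsf{f}}_n\Vert^2$, I would obtain
\begin{equation}
C_N=\frac{1}{2N^2}\sum_{n,k=1}^{N}D[\vec{\mathsf{f}}_n,\vec{\mathsf{f}}_k]=1-\frac{1}{N^2}\left\Vert\sum_{n=1}^{N}\vec{\mathsf{f}}_n\right\Vert^2 .
\end{equation}
This is the entire computational content: the figure of merit equals $1$ minus a manifestly nonnegative term.

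From this identity the upper bound is immediate, $C_N\le 1$, with equality if and only if the centroid $\tfrac{1}{N}\sum_n\vec{\mathsf{f}}_n$ vanishes. It then remains to check that a regular $(N-1)$-simplex attains the bound. I would do this by showing that $N$ unit vectors with a common pairwise inner product $c$ and zero sum force $c=-1/(N-1)$: indeed $0=\Vert\sum_n\vec{\mathsf{f}}_n\Vert^2=N+N(N-1)c$. Conversely, equiangularity with $c=-1/(N-1)$ is exactly the defining property of the regular simplex inscribed in $S^{N-1}$, and it makes the centroid vanish, hence $C_N=1$. As a consistency check this reproduces the special cases already noted in the text: $N=2$ gives $c=-1$ (antipodal points $\vec{\mathsf{f}}_1=-\vec{\mathsf{f}}_2$) and $N=3$ gives $c=-1/2=\cos(2\pi/3)$ (equilateral triangle).

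The genuine conceptual obstacle is not the bound itself but the claim of optimality of the \emph{regular} simplex. Because $C_N$ depends on the configuration only through the centroid, every zero-centroid arrangement attains $C_N=1$; the maximizer is therefore far from unique (for even $N$, pairs of antipodal vectors such as $\{e_1,-e_1,e_2,-e_2,\dots\}$ also achieve the bound while being degenerate rather than regular). To single out the regular simplex one must add a principle beyond maximizing $C_N$ as defined — either a symmetry/equidistribution requirement on the vertices, or a reading of the figure of merit as the sum of the \emph{edge lengths} $\sqrt{D}$ rather than their squares, as the parenthetical remark preceding the theorem suggests. In the latter formulation the maximizer is essentially unique up to rotation, but the variational problem is substantially harder and no longer reduces to a one-line centroid computation; I would flag this tension explicitly and, for the stated squared-distance functional, present the regular simplex as the natural maximally symmetric representative of the optimal class.
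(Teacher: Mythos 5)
Your proof is correct and follows essentially the same route as the paper's: both collapse the double sum via the Gram/centroid identity to $\sum_{n,k}\Vert \vec{\mathsf{f}}_n-\vec{\mathsf{f}}_k\Vert^2 = 2N^2 - 2\Vert\sum_n \vec{\mathsf{f}}_n\Vert^2$, deduce $C_N\leq 1$, and characterize equality by a vanishing barycenter, which the regular simplex satisfies. Your closing observation---that any zero-centroid configuration attains the bound, so the regular simplex is merely one (maximally symmetric) representative of the optimal class rather than the unique maximizer---is sound and in fact sharper than the paper, whose proof stops at citing the zero-barycenter property of the regular simplex without addressing uniqueness.
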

\begin{proof}
We consider $\vec{\mathsf{f}}_n$ as a vector going from the center $O$ to a point of the hypersphere of radius 1. We have:
\[
\sum_{n,k=1}^N  \Vert \vec{\mathsf{f}}_n - \vec{\mathsf{f}}_k \Vert ^2  = 2N^2- 2\sum_{n,k=1}^N (\vec{\mathsf{f}}_n,\vec{\mathsf{f}}_k).
\]
Since
\[
\Vert \sum_{n=1}^N \vec{\mathsf{f}}_n \Vert ^2 = \sum_{n,k=1}^N(\vec{\mathsf{f}}_n,\vec{\mathsf{f}}_k),
\]
we deduce that
\[
\sum_{n,k=1}^N  \Vert \vec{\mathsf{f}}_n - \vec{\mathsf{f}}_k \Vert ^2 = 2N^2 - 2\Vert \sum_{n=1}^N \vec{\mathsf{f}}_n \Vert ^2.
\]
This expression shows that the maximum value of $\sum_{n,k}  \Vert \vec{\mathsf{f}}_n - \vec{\mathsf{f}}_k \Vert ^2$ is $2N^2$, i.e. the inverse of the normalization factor of $C_N$. We obtain that $C_N\leq 1$. The maximum value is reached for $\sum_{n=1}^N \vec{\mathsf{f}}_n =0$. This is the equation of the simplex barycenter, which is equal to zero for a regular simplex \cite{simplex}.
\begin{flushright}
$\square$
\end{flushright}
\end{proof}

\section{Numerical optimal control algorithm}\label{appb}

This paragraph is aimed at briefly describing the extended version of GRAPE used in the numerical simulations \cite{grape}. This extension is closely related to the concept of optimal tracking introduced in Ref.~\cite{trackinggrape}, where the goal of the control is to steer the evolution of the system along a specified trajectory. To simplify the description of the algorithm, we consider here the dynamics of the spins in the $(y,z)$- plane. We recall that the dynamics can be integrated step by step as follows. We denote by $\vec{M}=(y,z)$ the state of the system.
After a $\delta$- pulse and a free relaxation, we have:
\begin{equation}
\vec{M}_1=\mathcal{L}_T\mathcal{R}_\theta\vec{M}_0
\end{equation}
where
\begin{eqnarray*}
\mathcal{R}_\theta=\left(\begin{array}{cc} \cos\theta & -\sin\theta \\
\sin\theta & \cos\theta\end{array}\right),
\end{eqnarray*}
and
\begin{eqnarray*}
\mathcal{L}_T\vec{M}=\left(\begin{array}{c} 0 \\
1-e^{-T/T_1} \end{array}\right)+
\left(\begin{array}{cc} \exp[-T/T_2] & 0 \\
0 & \exp[-T/T_1] \end{array}\right)\vec{M},
\end{eqnarray*}
with $\theta$ the angle of the $\delta$- pulse and $T$ the time between two $\delta$- pulses.

After $N$ processes, we get:
\begin{equation}
\vec{M}_N=\mathcal{L}_{T_N}\mathcal{R}_{\theta_N}\mathcal{L}_{T_{N-1}}\mathcal{R}_{\theta_{N-1}}\cdots
\mathcal{L}_{T_{1}}\mathcal{R}_{\theta_{1}}\vec{M}_0
\end{equation}
where the parameters $\theta_k$ have to be adjusted to maximize a given figure of merit $\Phi$ and the times $T_N=T_{N-1}=\cdots = T_1$ are fixed. We define the control field as $\vec{\theta}=(\theta_1,\theta_2,\cdots,\theta_N)$. A standard GRAPE algorithm can then be used to maximize $\Phi$.

Let us assume for instance that $\Phi=\Phi(\vec{M}_N)$. The gradient can be written as follows:
\begin{equation}
\frac{\partial \vec{M}_N}{\partial \theta_k}=\mathcal{L}_{T_N}\mathcal{R}_{\theta_N}\mathcal{L}_{T_{N-1}}\mathcal{R}_{\theta_{N-1}}\cdots d\mathcal{R}_{\theta_k} \cdots
\mathcal{L}_{T_{1}}\mathcal{R}_{\theta_{1}}\vec{M}_0
\end{equation}
where
\begin{eqnarray*}
d\mathcal{R}_\theta=\left(\begin{array}{cc} -\sin\theta & -\cos\theta \\
\cos\theta & -\sin\theta\end{array}\right).
\end{eqnarray*}
Introducing the adjoint state $N_k$ such that:
\begin{equation}
\vec{N}_k=\mathcal{L}^{-1}_{T_{k}} \cdots \mathcal{R}^{-1}_{\theta_{N-1}}\mathcal{L}^{-1}_{T_{N-1}}\mathcal{R}^{-1}_{\theta_N}\mathcal{L}^{-1}_{T_N}\vec{M}_N,
\end{equation}
we get:
\begin{equation}
\frac{\partial \vec{M}_N}{\partial \theta_k}=\vec{N}_k d\mathcal{R}_{\theta_k}\vec{M}_{k-1}.
\end{equation}
At each step of the algorithm, the field $\vec{\theta}$ is corrected as follows:
\begin{equation}
\vec{\theta}\to \vec{\theta}+\varepsilon \frac{\partial \Phi}{\partial \vec{\theta}},
\end{equation}
where $\varepsilon$ is a small parameter chosen to ensure the increase of the figure of merit and
\begin{equation}
\frac{\partial \Phi}{\partial \vec{\theta}}=\frac{\partial \Phi}{\partial \vec{M}_N}\frac{\partial \vec{M}_N}{\partial \vec{\theta}}.
\end{equation}

In the fingerprinting procedure, we consider the case of two spins to simplify the notations. The two spins are respectively described by the coordinates $y$ and $\tilde{y}$, whose dynamics are given by:
\begin{equation}
\vec{M}_N=\begin{pmatrix} y_N \\ z_N \end{pmatrix}=U_N\cdots U_1 \begin{pmatrix} y_0 \\ z_0 \end{pmatrix}
\end{equation}
and
\begin{equation}
\vec{\tilde{M}}_N=\begin{pmatrix} \tilde{y}_N \\ \tilde{z}_N \end{pmatrix}=U_N\cdots U_1 \begin{pmatrix} \tilde{y}_0 \\ \tilde{z}_0 \end{pmatrix},
\end{equation}
where $U_k(\theta_k)$ depends only on $\theta_k$, the $k$~th control parameter. The figure of merit $C$ to maximize is given by:
\begin{equation}
C=\frac{1}{2}\sum_{i=1}^N[y_i-\tilde{y}_i]^2,
\end{equation}
where the vectors are not divided by their norms for clarity purpose.
The gradient of $C$ with respect to $\theta_k$ can be written as:
\begin{equation}
\frac{\partial C}{\partial \theta_k}=\sum_{i\geq k}^N (y_i-\tilde{y}_i)(\frac{\partial y_i}{\partial \theta_k}-\frac{\partial \tilde{y}_i}{\partial \theta_k}).
\end{equation}
Since:
\begin{equation}
\frac{\partial y_i}{\partial \theta_k}=\partial_k y_i=U_i\cdots U_{k+1}\partial_k U_k U_{k-1}\cdots U_1 \vec{M}_0|_y,
\end{equation}
we have:
\begin{equation}
\partial_k C=\sum_{i\geq k}^N (y_i-\tilde{y}_i)U_i\cdots U_{k+1}\partial_k U_k U_{k-1}\cdots U_1 (\vec{M}_0-\vec{\tilde{M}}_0)|_y,
\end{equation}
which can also be written as:
\begin{equation}
\partial_k C=\sum_{i\geq k}^N [(y_i-\tilde{y}_i)U_i\cdots U_{k+1}]\partial_k U_k U_{k-1}\cdots U_1 (\vec{M}_0-\vec{\tilde{M}}_0)|_y.
\end{equation}
Introducing a generalized adjoint state $\mathcal{P}$ such that:
\begin{equation}
\mathcal{P}_k=\sum_{i\geq k}^N [{^t\begin{pmatrix}y_i-\tilde{y}_i \\ 0\end{pmatrix}}U_i\cdots U_{k+1}],
\end{equation}
we obtain:
\begin{equation}
\partial_k C=\mathcal{P}_k \partial_k U_k (\vec{M}_{k-1}-\vec{\tilde{M}}_{k-1})|_y.
\end{equation}
As in a standard GRAPE algorithm, the field $\vec{\theta}$ is corrected at each step of the algorithm as follows:
\begin{equation}
\vec{\theta}\to \vec{\theta}+\varepsilon \frac{\partial C}{\partial \theta_k}.
\end{equation}

\section{Comparison with the Inversion Recovery method}\label{appc}

We study in this paragraph the efficiency of OFP with respect to the Inversion Recovery approach (IR), which is a standard way to estimate the relaxation time $T_1$~\cite{levittbook}. IR is based on the successive application of a $\pi$- pulse followed by a series of $\pi/2$- pulses at different times to measure the transverse magnetization. A fair comparison between the two estimation techniques is difficult and heavily depends on the features of the experimental set-up. Here, we investigate the example analyzed in Sec.~\ref{sec3} to avoid such a discussion. For the IR, we consider a single-shot measurement process during a relaxation towards the thermal equilibrium state in which the longitudinal relaxation can be measured in an arbitrary short time with a noise added to the response of the system. Note that the waiting time between each acquisition is not included in this ideal approach, which overestimates the efficiency of a realistic IR. The response of the system is described as follows:
$$
g(t_m)=M_z(t_m)+\varepsilon \mathcal{N}(t_m),
$$
where the parameter $\varepsilon$ and the Gaussian noise $\mathcal{N}(t)$ are defined as in Sec.~\ref{sec3}. The time evolution of the longitudinal magnetization is given by a perfect inversion dynamics $M_z(t)=1-2\exp[-t/T_1]$ with 120  times $t_m$ separated by 10~ms. The same noise and the same number of measurement points are therefore used for OFP and IR, which ensures a fair comparison. The results are displayed in Fig.~\ref{figIR} and show that OFP has a better accuracy than IR. For a $T_1$ value of 300~ms and a noise amplitude $\varepsilon=0.05$, OFP achieves a precision of the order of $\pm 0.05$~ms, whereas the precision of IR is larger than 2~ms. A gain of a factor of 4 in estimating $T_1$ is obtained. Since there is no steady state in OFP, this factor is expected to increase for longer pulse sequences.
\begin{figure}
\includegraphics[width=\linewidth]{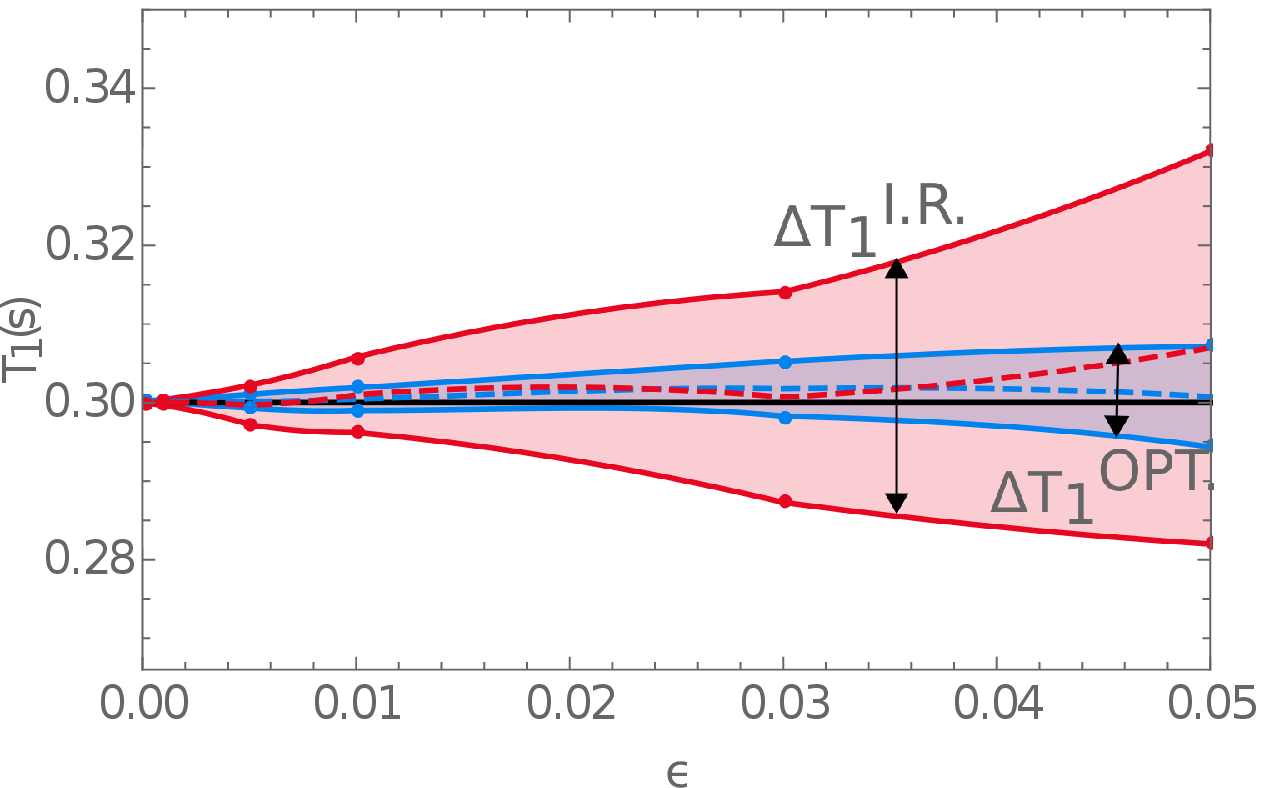}
\caption{(Color online) Width of the distribution of the estimated $T_1$ values by OFP (dark gray or blue) and IR (light gray or red) as a function of the noise amplitude $\epsilon$, which is dimensionless. The dashed lines depict the mean values of the two distributions. The horizontal solid line is the value of the $T_1$ parameter.} \label{figIR}
\end{figure}

\bibliographystyle{apsrev}%


\end{document}